\documentclass[10pt, conference]{ieeeconf}      
\IEEEoverridecommandlockouts                              
\overrideIEEEmargins



\usepackage{cite}\usepackage{hyperref}
\usepackage{amsmath,amssymb,amsfonts}
\usepackage{algorithmic}
\usepackage{graphicx}
\usepackage{textcomp}
\usepackage{amsmath,amssymb,bm,bbm,mathrsfs,amscd}
\usepackage{calc}
\usepackage{color}
\usepackage{dsfont}
\usepackage{graphicx}
\usepackage{epstopdf}
\usepackage{epsfig}
\usepackage{tikz}
\usepackage{amsmath}
\usepackage{amsfonts}
\usepackage{amssymb}
\usepackage{rotating}
\usepackage{mathtools}
\usepackage{color}
\usepackage{subfig}
\usepackage{enumerate,pgfplots}
\newtheorem{theorem}{Theorem}
\newtheorem{definition}{Definition}

\newtheorem{example}{Example}
\newtheorem{remark}{Remark}

\newcommand{\ba}{\begin{array}}
\newcommand{\ea}{\end{array}}

\newcommand{\be}{\begin{equation}}
\newcommand{\ee}{\end{equation}}

\newcommand{\ds}{\displaystyle}

\newcommand{\mc}{\mathcal}

\newcommand{\Z}{\mathbb{Z}}

\newcommand{\cost}{h}
\def\1{\boldsymbol{1}}

\newcommand{\E}{\mathbb{E}}

\newcommand{\R}{\mathbb{R}}

\def\Z{\mathbb{Z}}

\def\E{\mathbb{E}}
\def\R{\mathbb{R}}

\def\1{\boldsymbol{1}}
\def\0{\boldsymbol{0}}

\tikzstyle{v_c}=[circle, draw,inner sep=2pt, minimum width=12pt, color=blue]
\tikzstyle{v_a}=[circle, draw,inner sep=2pt, minimum width=12pt, color=red]
\tikzstyle{edge} = [draw,thick,-,font=\small ]
\tikzstyle{label} = [draw,fill=black,font=\normalsize]

\def\BibTeX{{\rm B\kern-.05em{\sc i\kern-.025em b}\kern-.08em
	T\kern-.1667em\lower.7ex\hbox{E}\kern-.125emX}}
\setlength{\belowcaptionskip}{-18pt}

\title{\LARGE \bf Optimal interventions in opinion dynamics\\ on large-scale, time-varying, random networks}
\author{Leonardo~Cianfanelli, 
Giacomo~Como, Fabio Fagnani, Asuman Ozdaglar, Francesca Parise
\thanks{Leonardo Cianfanelli, Giacomo Como, and Fabio Fagnani are with the  Department of Mathematical Sciences ``G.L.~Lagrange,'' Politecnico di Torino, 10129 Torino, Italy  (e-mail: {\{leonardo.cianfanelli;\,giacomo.como;\,fabio.fagnani;\}@polito.it}). G.~Como is also with the Department of Automatic Control, Lund University, 22100 Lund, Sweden.  Asuman Ozdaglar is with the Department of Electrical Engineering and Computer Science at the Massachusetts Institute of Technology, Cambridge, MA, USA (e-mail:asuman@mit.edu).	Francesca Parise is with the School of Electrical and Computer Engineering, Cornell University, Ithaca, NY, USA (e-mail: fp264@cornell.edu). This project was partially supported by the Italian Ministry of University and Research under the PRIN project ‘‘Extracting essential information and dynamics from complex networks’’, grant  no. 2022MBC2EZ.}
}

\begin{document}

\maketitle
\thispagestyle{empty}
\pagestyle{empty}


\begin{abstract}	
We consider two optimization problems in which a planner aims to influence the average transient opinion in the Friedkin-Johnsen dynamics on a network by intervening on the agents' innate opinions. Solving these problems requires full network knowledge, which is often not available because of the cost involved in collecting this information or due to privacy considerations. For this reason, we focus on intervention strategies that are based on statistical instead of exact knowledge of the network. We focus on a time-varying random network model where the network is resampled at each time step and formulate two intervention problems in this setting. We show that these problems can be casted into mixed integer linear programs in the \emph{type space}, where the type of a node captures its out- and in-degree and other local features of the nodes, and provide a closed form solution for one of the two problems. The integer constraints may be easily removed using probabilistic interventions leading to linear programs. Finally, we show by a numerical analysis that there are cases in which the derived optimal interventions on time-varying networks can lead to close to optimal interventions on fixed networks.
\end{abstract}
\begin{keywords}Friedkin-Johnsen dynamics; Intervention design; Time-varying networks; Random networks.
\end{keywords}

\section{Introduction} We focus on opinion dynamics in a social network in which individual agents update their opinions based on information exchange with the neighbors. We study efficient design of interventions modeled as the ability to influence the innate opinion of the agents (for example by educational efforts or by providing external information). Unfortunately, the design of these types of interventions in large-scale networks presents several issues. First, the full network structure may be not available because of the cost involved in collecting this information or due to privacy considerations. Second, even when the full network structure is known, the design of optimal interventions scales with the network size and may be costly to compute in large-scale systems.

In this paper we are interested in considering alternative solutions based on local statistical information regarding the nodes instead of exact network information. Specifically, we formulate two optimization problems in which the planner aims to maximize the weighted transient average opinion (e.g., this may be relevant when the opinion measures the willingness of the agents to contribute to some public good) in the Friedkin-Johnsen model \cite{friedkin1999social} while reducing the intervention cost, or subject to a budget constraint. We assume that the planner knows the network \emph{statistics}, namely, the joint empirical frequency of the agent in-degrees, out-degrees, costs for the intervention, initial and innate opinions, and stubborness levels (namely, the agent confidence in their innate opinions), jointly referred to as the agent \emph{types}, without knowing the exact network structure. We also assume that the agent types take a small number of different values with respect to the number of nodes, that is, the support of the degree distribution of the network is small and the node innate opinions can take a discrete number of values. This assumption is satisfied for example when the planner is interested in changing the opinions of a few stubborn nodes in the network among some extreme values \cite{yildiz2013binary,acemoglu2011opinion}.

Since exact network information is not available, we assume that the planner is interested in designing interventions that work well in expectation for random networks that are consistent with the known network statistics. We focus in particular on a time-varying random network model in which the network is resampled at every time step. In this  case, we characterize the exact evolution of the expected average opinion as the output of a 1-dimensional linear recursion that depends only on the known quantities. We then rely on this 1-dimensional model to design interventions aimed to optimize the expected average opinion. Note that, in this framework, interventions need to be anonymous, that is, they depend solely on agent types (which is the only available information). With this constraint in mind, we assume that the planner can change the agent types at each time step, resulting in a new time-varying statistics. 
We then prove that the resulting optimization problems are mixed integer linear programs, whose dimension scales with the number of types instead of the network size, leading to a reduction of complexity in large-scale networks. The integer constraints may be easily removed using probabilistic interventions leading to linear programs. Additionally, we show that one of the two optimization problems can be solved in closed form. 

The assumption that the network is resampled at every time step simplifies the technical analysis rendering the network at time $t$ independent of the history up to time $t-1$ and is inspired by time-variability typically observed in large scale networks. Nonetheless, we  note that one could also be interested in designing interventions for random networks that are consistent with the given statistics but are sampled only once (at the beginning of the process) instead of  every time step. While the technical analysis of this case is left for future work, we provide a numerical example suggesting that the obtained 1-dimensional linear recursion could also be used to approximate the opinion evolution in the case of fixed networks, in the limit of large populations. In fact, our numerical study suggests that this approximation could hold not just in expectation but also with high probability. Similarly, our numerical study suggests that the intervention designed by using the procedure for time-varying random networks performs well even in the case of a fixed network sampled once.

Overall, the proposed approach has several advantages. First, it only requires knowledge of the network statistics instead of the exact network. Second, it works in the case of time-varying networks, which could be an advantage for large networks subject to time-variability. Third, it scales with the number of types instead of the network size, allowing for the design of interventions in very large networks. 

Our paper is related to two strands of literature. 
First, it relates to papers studying interventions in opinion dynamics over fixed deterministic networks \cite{yi2021shifting,gionis2013opinion,he2021dynamic,sun2023opinion,ancona2022model}.
We also note that our work is related to papers studying interventions in network games (see e.g. \cite{ballester2006s,galeotti2017targeting})   since, under suitable assumptions, the best-response dynamics of such games are related to  the dynamics studied in this paper. The main difference from the  previously cited papers and our work is that we assume that exact network information is not available. 
Second, our paper relates to a rich strand of literature that studies network processes and interventions in large networks by exploiting statistical network information under different random network models (including for example stochastic block models,   configuration models and graphons). See, among others,  \cite{Dasaratha2017,graphons} for centrality measures, \cite{galeotti2010network,parise2023graphon,aurell2022finite,carmona2022stochastic,caines2021graphon} for network and mean field games, \cite{lelarge2012diffusion,akbarpour2018just,erol2023contagion,sadler2020diffusion,jackson2025behavioral} for contagion/diffusion processes and \cite{gao2019graphon} for linear  dynamical systems. Among these, our paper is mostly related to \cite{golub2012homophily}, which studies the speed of learning in opinion dynamics over random networks sampled from a stochastic block model and to \cite{rossi2017threshold,messina2024optimal}, in which contagion dynamics and interventions are studied over networks generated from a configuration model. The main difference of our work from the ones above is the focus on Friedkin-Johnsen opinion dynamics and the corresponding interventions.

The paper is organized as follows. The model and the optimization problems under exact network knowledge are introduced in Section \ref{sec:2}. In Section \ref{sec:meanfield}, we propose and analyze the dynamics in which the network is randomly resampled at each time step, and formulate the optimization problem for the expected dynamics. Section \ref{sec:solutions} analyzes the properties of such optimizations problems and provide their solutions. Section \ref{sec:heuristic} discusses possible extensions to the case of dynamics without network resampling. Finally, Section \ref{sec:conclusion} summarizes the contribution and discusses future research.

\noindent \textit{Notation:}
We let $\R, \R_+, \Z, \Z_+$ denote the set of reals, non-negative reals, integers, and non-negative integers, respectively. $\0$ and $\1$ denote the vectors of all zeros and all ones. The indicator function is denoted $\mathbb I$. For a probability distribution $p$ in $[0,1]^\Omega$ over a finite set $\Omega$ and a vector $x$ in $\R^\Omega$, we let $\langle p,x \rangle = \sum_{w \in \Omega} p_w x_w$.
\section{Dynamics and interventions with exact network knowledge}\label{sec:2}
\subsection{Network and dynamics}
We model the network of agents by a finite directed multigraph $\mathcal{G} = (\mathcal{V}, \mathcal{E}, W)$, where $\mathcal{V}$ is a node set, $\mathcal{E}$ is a set of directed links, and $W$ in $\Z^{\mc V \times \mc V}_+$ is the adjacency matrix, whose $(i,j)$-th entry $W_{ij}$ denotes the number of links from $i$ to $j$, measuring the influence of node $j$ on node $i$. Let $n = |\mathcal{V}|$ denote the network size, $l = |\mc E|$ denote the cardinality of the link set and
let $\bar k = W\mathbf{1} > \0$ and $\bar d = W'\mathbf{1} > \0$ denote the out- and in-degree vectors, respectively.

Every agent is endowed with a stubborness level and an innate opinion, collected in two vectors $\bar a$, $\bar c$ in $[0,1]^{\mc V}$, respectively. Furthermore, every agent $i$ is endowed with a time-varying opinion $Z_i(t)$ in $[0,1]$, with $t$ in $\Z_+$, and the agents' opinions are collected into the opinion vector $Z(t)$ in $\mathds [0,1]^{\mc V}$.
Given an initial condition $Z(0)$ in $[0,1]^{\mc V}$, the classic discrete-time Friedkin-Johnsen (FJ) dynamics are
\begin{equation}\label{eq:fj}
	Z_i(t)=(1-\bar a_i)\frac{1}{\bar k_i} \sum_{j \in \mc V} W_{ij} Z_j(t-1)+\bar a_i \bar c_i, \quad \forall i \in \mc V\,,
\end{equation}
for every positive time $t$ in $\Z_+$.

\subsection{Intervention and optimization problems}
To influence the dynamics in \eqref{eq:fj}, we assume that a planner can design dynamical interventions that modify the agent innate opinions, with the goal of maximizing a weighted average opinion within a positive time horizon $T$ in $\Z$. For an intervention $u: 1,\cdots,T \to [0,1]^{\mc V}$, the corresponding controlled Friedkin-Johnsen dynamics are
\be\label{eq:dyn_int_fixed}
\!Z^{(u)}_i(t) = (1-\bar a_i) \frac{1}{\bar k_i} \sum_{j \in \mc V} W_{ij} Z^{(u)}_j(t-1)+\bar a_i \big(\bar c_i + u_i(t)\big),
\ee
for every time $t = 1,\cdots, T$ and node $i$ in $\mc V$. 
Since the innate opinions after the intervention must belong to $[0,1]$, an intervention is feasible only if
\be\label{eq:U}
u_i(t) \in [0,1-\bar c_i]\,, \quad \forall i \in \mc V, \ \forall t = 1,\cdots,T\,.
\ee
Now, let
\be\label{eq:avg_micro}
\bar Z^{(u)}(t) = \frac 1n \sum_{i \in \mc V} Z_i^{(u)}(t)
\ee
denote the average opinion at time $t$. Moreover, let $\theta_t \ge 0$ measure the importance of the average state at time $t$ in the objective function. For example, a uniform $\theta$ means that the planner is interested in maximizing the time-average of the average opinion up to the time horizon, whereas $\theta_T = 1$ and $\theta_t = 0$ for every $t \neq T$ means that the planner is interested in maximizing the average opinion at the final time. Let $\bar\cost_{i,t}(u_i(t))$ denote the cost function for applying intervention $u_i(t)$ on node $i$ at time $t$. We then consider two optimization problems.
In the first problem, we assume that the planner has a finite budget $B>0$, so that \be\label{eq:budget_micro}
\sum_{t=1}^T \sum_{i \in \mc V} \bar \cost_{i,t}(u_i(t)) \le B\,,
\ee
resulting in the optimization problem
\be\label{eq:prob_micro_budget}
\begin{aligned}
	\underset{u: 1,\cdots,T \to [0,1]^{\mc V}}{\max} \
	& \sum_{t = 1}^T \theta_t \bar Z^{(u)}(t) \\[6pt]
	\text{s.t.} \ & \eqref{eq:dyn_int_fixed}-\eqref{eq:budget_micro}\,.\\[8pt]
\end{aligned}
\ee
In the second problem, the cost for the intervention is included in the objective function and there is no budget constraint, namely,
\be\label{eq:prob_micro}
\begin{aligned}
	\underset{u: 1,\cdots,T \to [0,1]^{\mc V}}{\max} \
	& \sum_{t = 1}^T \theta_t \bar Z^{(u)}(t) - \sum_{t=1}^T \sum_{i \in \mc V} \bar \cost_{i,t}(u_i(t)) \\[6pt]
	\text{s.t.} \ & \eqref{eq:dyn_int_fixed}-\eqref{eq:avg_micro}\,.
\end{aligned}\\[6pt]
\ee
Note that in Problems \eqref{eq:prob_micro_budget} and \eqref{eq:prob_micro} we assumed that the central planner can modify the innate opinions of each agent in the network. One could also consider a variant in which only specific agents can be targeted by adding additional constraints.

\subsection{A restriction on agent types}
In the following, we will restrict our attention to networks in which nodes can be divided into a finite number of types. Specifically, we  associate to every node a \emph{type} $\omega_i \in \Omega$ that defines its in-degree $\bar d_i$, out-degree $\bar k_i$, stubborness level $\bar a_i$, innate opinion $\bar c_i$, initial condition $Z_i(0)$ and cost functions $\bar h_{i,t}(\cdot)$.  Let $\mc V_w=\{ i \in \mc V \mid w_i = w\}$ denote the set of nodes of type $w\in \Omega$. We then define $d,k,a,c,s,h$ in $\R^\Omega$ such that, for every $w$ in $\Omega$,
$$
d_{w} = \bar d_i, \quad k_{w} = \bar k_i, \quad a_{w} = \bar a_i, \quad c_{w} = \bar c_i\,,
$$
$$
s_{w} = Z_i(0)\,, \ h_{w,t}(\cdot) = \bar h_{i,t}(\cdot)\,, \quad \forall i \in \mc V_{w}, \ \forall t = 1,\cdots,T\,.
$$
The vector of empirical type frequencies $p$ in $[0,1]^{\Omega}$
$$
p_w = |\mc V_w|/n\,, \quad \forall w \in \Omega
$$
is referred to as  \emph{network statistics}.

\begin{remark}\label{remark:countable}
The assumption that the set of types is finite is particularly suitable for example for the French-DeGroot model with stubborn nodes. This is a particular case of the FJ model in which the nodes are classified in two subsets based on their stubborness level, that is either $0$ for regular nodes, or $1$ for stubborn nodes, so that $\bar a_i\in\{0,1\}$. In many practical cases, innate opinions may also be modeled as taking only a finite number of values, for example the assumption $\bar c_i\in \{0,0.5,1\}$ could be used to distinguish between extremists and neutral agents. In such cases it is reasonable to expect the number of types $|\Omega|$ to be much smaller than the number of agents $n$ \cite{yildiz2013binary}.
\end{remark}

\subsection{The issue for large networks}
It is important to stress that Problems \eqref{eq:prob_micro_budget}-\eqref{eq:prob_micro} are convex programs if and only if the cost functions $\bar h_{i,t}$ are convex. Even more importantly, their solution requires full network knowledge, which is often an unrealistic assumption, and their complexity scales with the network size.

To overcome these issues, we  propose an alternative intervention method based on statistical instead of exact network information, under the assumption that the network of interest is a realization of a known time-varying random graph  model consistent with the network statistics introduced in the previous subsection.

\section{Dynamics and formulation without exact network knowledge}\label{sec:meanfield}
In the rest of the paper, we shall assume that the planner knows the agent types and the network statistics generating the network, but does not have full network knowledge, 
and suggest the following  intervention  procedure. First, we assume that since exact network information is not available, the planner is interested in generating interventions that work well in expectation for networks sampled from a generating model based on the available network statistics. Note that interventions in this context need to be based only on agent types as that is the only information available. In Section~\ref{sec:dynamics} we introduce the specific random graph model used in this paper (which is time-varying in the sense that a new network is generated at each time step), the type of interventions considered, and the corresponding FJ dynamics.  In Section~\ref{sec:exp_dyn}  we  prove that the expected average opinion of the FJ dynamics on the considered time-varying random network model is given by the output of a 1-dimensional linear recursion that depends on the network statistics after the intervention. Finally, in Section \ref{sec:problem_ref}, we formulate two   optimization problems for such expected dynamics over time-varying random networks corresponding to the two optimization  Problems \eqref{eq:prob_micro_budget}-\eqref{eq:prob_micro} (which were formulated for fixed and known networks).

\subsection{The time-varying random network model with resampling}\label{sec:dynamics}

\subsubsection{Interventions}
Since without exact network knowledge the planner cannot target specific agents, we consider interventions in which the planner can  just change the agent types. We describe such interventions with a variable  $\beta:\{1,\cdots,T\} \to [0,1]^{\Omega \times\Omega}$, whose entry $\beta_{ww'}(t)$ denotes the fraction of agents whose type is modified from $w$ to $w'$ at time $t$. For a type $w$ in $\Omega$, let
\begin{align*}
\Omega(w) := \{w' \in \Omega: \ &d_{w} = d_{w'}, a_w = a_{w'}, k_w = k_{w'}, \\
&s_w=s_{w'}, h_{w,t}=h_{w',t}\}
\end{align*}
denote the set of types $w'$ that have all the same parameters as $w$ except (possibly) for the innate opinion. For every time $t=1,\cdots,T$, we assume that the intervention $\beta$ must satisfy the mass conservation constraint
\be\label{eq:constraints1}
\sum_{w' \in \Omega} \beta_{ww'}(t) = p_w\,,
\ee
for every $w$ in $\Omega$, and the constraint 
\be\label{eq:constraints2}
\beta_{ww'}(t) = 0\,, \quad \forall w' \notin \Omega(w)\,,
\ee
for every $w,w'$ in $\Omega$, which ensures that the intervention modifies only the agents' innate opinions (consistently with the model in Section \ref{sec:2}).
Note that such intervention generates a new network statistics at every time $t$, which we denote by \be\label{eq:pbeta}\ba{rcl}
\!\!\!\!\!\! p^{(\beta)}_w(t) = \!\ds \sum_{w' \in \Omega} \beta_{w'w}(t), 
\ea\ee
for every $w$ in $\Omega$.

\subsubsection{Well posedness}

For a given probability distribution $p$ on $\Omega$ and finite set $\mc V$ of cardinality $n=|\mc V|$, a network with node set $\mc V$ and statistics $p$ exists if and only if
\be\label{consistent-1moment}
\langle p,d \rangle = \langle p,k \rangle \quad \text{and} \quad p_w n \in \Z_+\,,\forall w \in\Omega.
\ee

Moreover, to be implementable in a  network with $n$ agents,  an intervention $\beta$ should satisfy the constraint
\be\label{consistent-integer}n\beta_{ww'}(t)\in\Z_+\,,\qquad\forall w,w' \in\Omega\,, \ \forall t = 1,\cdots,T\,,\ee
since the number of nodes $n_{ww'}(t)$ of type $w$ whose type is modified to $w'$ at time $t$ due to the intervention needs to be integer. An intervention $\beta$ is termed \emph{feasible} if it satisfies \eqref{eq:constraints1}, \eqref{eq:constraints2} and \eqref{consistent-integer}. A triple $(n,p,\beta)$ of a positive integer $n$, a statistics $p$, and an intervention $\beta$ is termed \emph{compatible} if conditions \eqref{eq:constraints1}, \eqref{eq:constraints2}, \eqref{consistent-1moment},\eqref{consistent-integer}
are satisfied, namely, if $\beta$ is feasible and a network with size $n$ and statistics $p$ exists.

\subsubsection{The sampling procedure}

We consider the following time-varying random network model.

\begin{definition}\label{def:resampling}
	
	 For a compatible triple $(n,p,\beta)$, a time-varying network and an intervention are constructed as follows.
	 First, a type is assigned to each agent $i\in\mc V$ consistently with $p$. Then,
	for every time $t = 1,\cdots,T$:
	\begin{enumerate}
	\item (Links) for every node $i$ in $\mc V$, draw $\bar k_i$ half-links stemming from it, and $\bar d_i$ half-links entering it. Then, match uniformly at random the $l$ half-links stemming from the nodes with the half-links entering the nodes. This defines the adjacency matrix $W(t)$, whose $(i,j)$-th entry $W_{ij}(t)$ counts the number of half-links stemming from $i$ that are matched with half-links entering $j$. $W(t)$ defines  a link set $\mc E(t)$. Note that this is equivalent to generating a network from a configuration model ensemble \cite{newman2018networks};
	\item (Interventions) For every $w$ in $\Omega$, split $\mc V_w$ in $|\Omega|$ random disjoint subsets $\mc V_{ww'}(t)$ of size $n\beta_{ww'}(t)$
	and define
	\be\label{eq:time_innate}
	u^{(\beta)}_i(t) = c_{w'}-c_w\,, \quad \bar c_i^{(\beta)}(t) = c_w + u^{(\beta)}_i(t) = c_{w'}\,, 
	\ee
	for every node $i$ in $\mc V_{ww'}(t)$ and types $w,w'$ in~$\Omega$.
	\end{enumerate}
\end{definition}
\begin{remark}\label{remark:resampling}
Note that $W(t)$ is independent of $W(\tau)$ and $u^{(\beta)}(t)$ is independent of $u^{(\beta)}(\tau)$ for every $t \neq \tau$.
\end{remark}
\begin{remark}\label{remark:design_u}
Let \be
\!\!\! n_{ww'}(t):=|\{i \in \mc V_w: u_i^{(\beta)}(t) = c_{w'} - c_w\}|
\ee
be the number of agents switched from type $w$ to $w'$ at time~$t$. Note that according to the procedure in Definition \ref{def:resampling} 
\be\label{eq:nww} 
n_{ww'}(t) = n \beta_{ww'}(t).
\ee
As an alternative, one could define the interventions in step 2) of Definition \ref{def:resampling} to be probabilistic by assuming that every node $i$ in $\mc V_w$ has probability $\beta_{ww'}(t) / p_w$ of becoming of type $w'$. This would imply that $n_{ww'}(t)$ is a random variable with $\E[n_{ww'}(t)] = n\beta_{ww'}(t)$. All subsequent results can be adapted easily to this case.
\end{remark}

\subsubsection{FJ dynamics}
 
For networks generated according to the time-varying random network model above, we define the following FJ dynamics: $Y^{(\beta)}_i(0)=Z_i(0)$ and
\be\label{eq:dyn_int}
\!Y^{(\beta)}_i(t) = (1-\bar a_i) \frac{1}{\bar k_i} \sum_{j \in \mc V} W_{ij}(t) Y^{(\beta)}_j(t-1)+\bar a_i \big(\bar c_i + u^{(\beta)}_i(t)\big),
\ee
where we used $Y^{(\beta)}(t)$ in $[0,1]^{\mc V}$ to denote the opinion vector at time $t$ under the Friedkin-Johnsen dynamics on the resampled network. Let also
\be\label{eq:X} \!\!\!\! \bar{X}^{(\beta)}(t)=\frac 1l \sum_{i \in \mc V} \bar d_i{Y_i^{(\beta)}(t)}\,, \ \bar{Y}^{(\beta)}(t)=\frac{1}{n} \sum_{i \in \mc V}{Y^{(\beta)}_i(t)}\,,\ee
denote respectively the average state weighted by the in-degree, and the average state.
Note that, since the network and interventions in this framework are stochastic, $\bar X^{(\beta)}(t)$ and $\bar Y^{(\beta)}(t)$ are random objects. The objective of the next section will be to study the evolution of their expectations,
\be\label{eq:x}
x^{(\beta)}(t) = \E_{1:t}[\bar X^{(\beta)}(t)]\,, \quad y^{(\beta)}(t) = \E_{1:t}[\bar Y^{(\beta)}(t)]\,,
\ee 
where the expected value is taken over the sequence of random networks and interventions generated until time $t$ as per Definition \ref{def:resampling}.

\subsection{Analysis of the expected dynamics}\label{sec:exp_dyn}
The next result proves that $x^{(\beta)}(t)$ is determined by a 1-dimensional linear recursion, and $y^{(\beta)}(t)$ may be seen as the output of such recursion. For this analysis, it is useful to refer to links pointing to nodes of type $w$ as \emph{links of type $w$}. Accordingly, we 
let 
\be\label{eq:q}
q_{w} = \frac{p_w d_w}{\langle p,d\rangle} = \frac{n}l p_wd_w
\ee
denote the initial fraction of links of type $w$ and
\be\label{eq:qbeta}\ba{rcl}
 q_w^{(\beta)}(t) = \!\ds \sum_{w' \in \Omega} \frac{d_{w} \beta_{w'w}(t)}{\langle p,d \rangle}
\ea\ee
be the same statistics after the intervention at time $t$.

\begin{theorem}\label{thm:dynamic}
	Let $\bar Y^{(\beta)}(t)$ be the average opinion of the FJ model on a resampled network generated by the triple $(n,p,\beta)$, and let $x^{(\beta)}(t), y^{(\beta)}(t)$ be the expected values as defined in \eqref{eq:x}. Let $\psi: [0,1] \to [0,1]^{\Omega}$  be
defined by
\be\label{eq:psi}
\psi_w(x) = (1-a_w) x + a_w c_w\,, \quad \forall w \in \Omega\,.
\ee
Then,
\begin{equation}\label{eq:recursion}
	\begin{cases}
		x^{(\beta)}(t)=\langle q^{(\beta)}(t),\psi(x^{(\beta)}(t-1))\rangle \\[4pt]
		y^{(\beta)}(t)=\langle p^{(\beta)}(t),\psi(x^{(\beta)}(t-1)) \rangle\,,
	\end{cases}
\end{equation}
for every $t=1,\cdots,T$, with $x^{(\beta)}(0) = x_0 = \langle q,s \rangle$.
\end{theorem}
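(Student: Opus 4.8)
The plan is to fix a time $t\in\{1,\dots,T\}$, condition on the history $\mathcal{H}_{t-1}$ of all networks and interventions generated up to time $t-1$ (so that $Y^{(\beta)}(t-1)$, hence $\bar X^{(\beta)}(t-1)$, is $\mathcal{H}_{t-1}$-measurable), and compute $\E[\,\cdot\mid\mathcal{H}_{t-1}\,]$ of both averages, exploiting that by Remark~\ref{remark:resampling} the network $W(t)$ and the relabeling $u^{(\beta)}(t)$ at time $t$ are independent of $\mathcal{H}_{t-1}$ (and of each other). For the network term, since in the configuration model each out-half-link of node $i$ is matched to a uniformly random in-half-link, $\E[W_{ij}(t)]=\bar k_i\bar d_j/l$, so that $\E\big[\tfrac{1}{\bar k_i}\sum_j W_{ij}(t)Y^{(\beta)}_j(t-1)\mid\mathcal{H}_{t-1}\big]=\tfrac{1}{l}\sum_j\bar d_j Y^{(\beta)}_j(t-1)=\bar X^{(\beta)}(t-1)$; this is precisely why the in-degree–weighted average drives the recursion. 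For the relabeling term, since $\mathcal V_w$ is partitioned into random subsets $\mathcal V_{ww'}(t)$ of size $n\beta_{ww'}(t)$, a node $i\in\mathcal V_w$ lands in $\mathcal V_{ww'}(t)$ with probability $\beta_{ww'}(t)/p_w$, and using \eqref{eq:constraints1} one gets $\E[\bar c_i^{(\beta)}(t)\mid\mathcal{H}_{t-1}]=\sum_{w'}\tfrac{\beta_{ww'}(t)}{p_w}c_{w'}$. Combining, for $i\in\mathcal V_w$,
\[
\E[Y^{(\beta)}_i(t)\mid\mathcal{H}_{t-1}]=(1-a_w)\bar X^{(\beta)}(t-1)+a_w\sum_{w'}\tfrac{\beta_{ww'}(t)}{p_w}c_{w'}.
\]

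Next I would sum this identity over $i$ with weights $1/n$ and, separately, with weights $\bar d_i/l=d_w/l$, grouping nodes by type so that $\sum_{i\in\mathcal V_w}$ contributes a factor $|\mathcal V_w|=np_w$. The structural fact from \eqref{eq:constraints2} that $\beta_{ww'}(t)=0$ unless $w'\in\Omega(w)$ — which forces $a_w=a_{w'}$ and $d_w=d_{w'}$ whenever $\beta_{ww'}(t)\neq0$ — allows one to reindex the resulting double sums by the target type $w'$ and recognize $p^{(\beta)}_{w'}(t)=\sum_w\beta_{ww'}(t)$ and $q^{(\beta)}_{w'}(t)=\tfrac nl d_{w'}p^{(\beta)}_{w'}(t)$. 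Using \eqref{eq:constraints1} together with $\langle p,d\rangle=l/n$ one then checks the bookkeeping identities $\sum_w q_w(1-a_w)=\sum_{w'}q^{(\beta)}_{w'}(t)(1-a_{w'})$ and the analogue with $p$ in place of $q$. This yields the conditional recursion
\[
\E[\bar X^{(\beta)}(t)\mid\mathcal{H}_{t-1}]=\langle q^{(\beta)}(t),\psi(\bar X^{(\beta)}(t-1))\rangle,\qquad
\E[\bar Y^{(\beta)}(t)\mid\mathcal{H}_{t-1}]=\langle p^{(\beta)}(t),\psi(\bar X^{(\beta)}(t-1))\rangle ,
\]
so that $x$ obeys a genuine one-dimensional recursion while $y$ is read off from $x^{(\beta)}(t-1)$.

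Finally I would take total expectations. Since each $\psi_w$ is affine in its argument, expectation commutes with $\psi$, giving $\E[\psi_w(\bar X^{(\beta)}(t-1))]=\psi_w(\E[\bar X^{(\beta)}(t-1)])=\psi_w(x^{(\beta)}(t-1))$, where the last equality uses the tower property and the fact that $\bar X^{(\beta)}(t-1)$ depends only on the randomness up to $t-1$. Combined with $\E[\bar X^{(\beta)}(t)]=x^{(\beta)}(t)$ and $\E[\bar Y^{(\beta)}(t)]=y^{(\beta)}(t)$, this is exactly \eqref{eq:recursion}; no induction is needed since the argument produces the recursion directly for every $t$. The base case is immediate: at $t=0$ there is no randomness, so $x^{(\beta)}(0)=\tfrac1l\sum_i\bar d_i Z_i(0)=\tfrac nl\sum_w p_w d_w s_w=\langle q,s\rangle$.

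I expect the main obstacle to be the type-aggregation bookkeeping in the second paragraph: cleanly pushing the per-node conditional expectations through the sum over types and re-expressing everything in terms of the post-intervention statistics $p^{(\beta)}(t)$, $q^{(\beta)}(t)$, which requires invoking the feasibility constraints \eqref{eq:constraints1}–\eqref{eq:constraints2} at exactly the right places (the preservation of $a_w$ and $d_w$ under relabeling is what makes the reindexing valid). A secondary point that needs a sentence of care is the identity $\E[W_{ij}(t)]=\bar k_i\bar d_j/l$: although the uniform random matching makes the entries $W_{ij}(t)$ dependent, each out-half-link is marginally matched uniformly among all $l$ in-half-links, and since only first moments enter, linearity of expectation suffices and no independence is required.
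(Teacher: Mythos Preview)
Your proposal is correct and follows essentially the same approach as the paper: condition on the history up to $t-1$, use $\E[W_{ij}(t)]=\bar k_i\bar d_j/l$ to collapse the network term to $\bar X^{(\beta)}(t-1)$, compute the expected post-intervention innate opinion, reindex the type sums via the feasibility constraints \eqref{eq:constraints1}--\eqref{eq:constraints2} to obtain the post-intervention statistics $p^{(\beta)}(t),q^{(\beta)}(t)$, and then take total expectations using the affinity of $\psi$. The only cosmetic difference is that you first compute the per-node conditional expectation and then aggregate, whereas the paper aggregates first and splits the sum into two terms; the underlying computations are identical.
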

\begin{proof}
Let $W_{ij}(t)$ be the number of links formed from node $i$ to node $j$  in the resampled network at time $t$ and $P_{ij}(t)=\frac{1}{\bar k_i} W_{ij}(t)$. Then, \be\label{eq:Pij}\mathbb{E}[W_{ij}(t)]=\bar{k}_i\bar d_j/l, \quad \E[P_{ij}(t)]=\bar d_j/l\,.
\ee
Given the realized opinion vector at time $t-1$, the new vector of opinions has components
$$Y_i^{(\beta)} (t) =(1-\bar a_i) \sum_j P_{ij}(t) Y^{(\beta)}_j(t-1)+ \bar a_i  \bar c^{(\beta)}_i(t)\,.$$
Let $\mathbb{E}_{t|t-1}[\bar X^{(\beta)}(t)]$ be the expectation of $\bar X^{(\beta)}(t)$ at time~$t$ conditioned on all randomness up to time $t-1$.
Then, 
$$\ba{rl}
	&\mathbb{E}_{t|t-1}[\bar X^{(\beta)}(t)]= \mathbb{E}_{t|t-1}[\frac{1}{l}\sum_i \bar d_i Y^{(\beta)}_i(t)]\\[8pt]
	= & \underbrace{\mathbb{E}_{t|t-1}[\frac{1}{l}\sum_i \bar d_i (1-\bar a_i) \sum_j P_{ij}(t) Y_j^{(\beta)}(t-1) ]}_{term 1} +\\[8pt]
	+ & \underbrace{\mathbb{E}_{t|t-1}[ \frac{1}{l}\sum_i \bar d_i \bar a_i  \bar c_i^{(\beta)}(t)]}_{term 2}.
\ea$$
We now analyze the two terms separately. First, note that 
\begin{align*}
term 1 &= \frac{1}{l}\sum_w  d_w (1- a_w) \!\! \sum_{\substack{i\in\mathcal{V}_w, \\[1pt] j \in \mc V}} \mathbb{E}_{t|t-1}[ P_{ij}(t)] Y^{(\beta)}_j(t-1) \\
	&= \frac{1}{l}\sum_w  d_w (1- a_w) \sum_{i\in\mathcal{V}_w} \bar X^{(\beta)}(t-1) \\
	&=\Big[ \sum_w   \frac{n}{l} p_w d_w      {(1- a_w)}  \Big] \bar X^{(\beta)}(t-1) \\
	&=\Big[ \sum_w  q_w      {(1- a_w)}  \Big] \bar X^{(\beta)}(t-1) \\
	&=\Big[ \sum_w  q^{(\beta)}_{w}(t)     {(1- a_w)}  \Big] \bar X^{(\beta)}(t-1)\,,
\end{align*}
where we used \eqref{eq:Pij} and \eqref{eq:X} in the second equality, \eqref{eq:q} in the fourth equality, and in the last step \eqref{eq:constraints2}, \eqref{eq:q} and \eqref{eq:qbeta}, which together imply $\langle q^{(\beta)}(t), \1-a \rangle = \langle q, \1- a \rangle$ for every feasible intervention $\beta$ and time $t$. For the second term, note that
\begin{align*}
	term 2&=\mathbb{E}_{t|t-1}[ \frac{1}{l}\sum_w d_w  a_{w}  \sum_{i\in\mathcal{V}_w}   \bar c^{(\beta)}_i(t)]\\
	&=\mathbb{E}_{t|t-1}[ \frac{1}{l}\sum_w d_w  a_w  \sum_{i\in\mathcal{V}_w}\sum_{w'} c_{w'}  \mathbb{I}[ \bar c_i^{(\beta)}(t)=c_{w'} ]]\\
	&= \frac{1}{l}\sum_w  d_w a_w \sum_{w'} c_{w'}   \mathbb{E}_{t|t-1}[ \sum_{i\in\mathcal{V}_w}  \mathbb{I}[ \bar c^{(\beta)}_i(t)=c_{w'} ]]\\
	&= \frac{1}{l}\sum_w  d_w a_w \sum_{w'} c_{w'}   \mathbb{E}_{t|t-1}[n_{ww'}(t)]\\
	&=\sum_{w'} a_{w'} c_{w'} \sum_w  \frac{n}{l} d_{w'} \beta_{ww'}(t)\\
	&=\sum_{w'}a_{w'} c_{w'}  q^{(\beta)}_{w'}(t)=\sum_{w}a_{w} c_{w}  q^{(\beta)}_{w}(t)
\end{align*}
where we used \eqref{eq:nww} and \eqref{eq:constraints2} in the fifth equivalence. In particular, \eqref{eq:constraints2} implies that, if $\beta_{ww'}(t)>0$, then, $d_w = d_{w'}$ and $a_w = a_{w'}$. This proves that $\E_{t|t-1}[\bar X^{(\beta)}(t)] = \langle q^{(\beta)}(t),\psi\big(\bar X^{(\beta)}(t-1)\big) \rangle$. Using this,
$$
\begin{aligned}
x^{(\beta)}(t) & = \E_{1:t}[\bar X^{(\beta)}(t)]  = \E_{1:t-1}\big[\E_{t|t-1}[\bar X^{(\beta)}(t)]\big] \\[3pt]
& = \E_{1:t-1} \big[ \langle q^{(\beta)}(t),\psi(\bar X^{(\beta)}(t-1))\rangle \big] \\[3pt]
& = \langle q^{(\beta)}(t),\psi(\E_{1:t-1}[\bar X^{(\beta)}(t-1)])\rangle\\
& = \langle q^{(\beta)}(t),\psi(x^{(\beta)}(t-1))\rangle\,,
\end{aligned}
$$
where we used the linearity of $\psi$.
Moreover, $x_0 = \E[\bar X^{(u)}(0)] =  \sum_{i} \bar d_i Y_i^{(\beta)}(0) / l = n \sum_w p_w d_w s_w / l = \langle q,s \rangle$. Same techniques prove the second equation in \eqref{eq:recursion}, with the only difference that $\bar Y^{(\beta)}(t)$ is the unweighted average state (cf. \eqref{eq:X}), resulting in the replacement of $q^{(\beta)}$ with $p^{(\beta)}$.
\end{proof}

Theorem \ref{thm:dynamic} states that the \emph{expected} average opinion when the network is resampled at each time step according to the procedure in Definition \ref{def:resampling} is described by a 1-dimensional linear recursion with output $y^{(\beta)}(t)$. One could also be interested in studying similar dynamics when the network is sampled only once at the beginning of the process instead of every iteration (we refer to this as the \textit{fixed random network case}). The  analysis of the expected dynamics in this model is more complex because $\mathbb{E}_{t|t-1}[ W_{ij}(t)]$ would in this case be time-varying and dependent on the  history observed up to $t-1$. Despite these technicalities, the next example  suggests that as $n$ grows large and for finite $T$,  $y^{(\beta)}(t)$ could still provide a good approximation of the realized dynamics, in fact not just in expectation but with high probability.

\begin{example}\label{example}
\begin{figure}
\centering
\includegraphics[width=6cm]{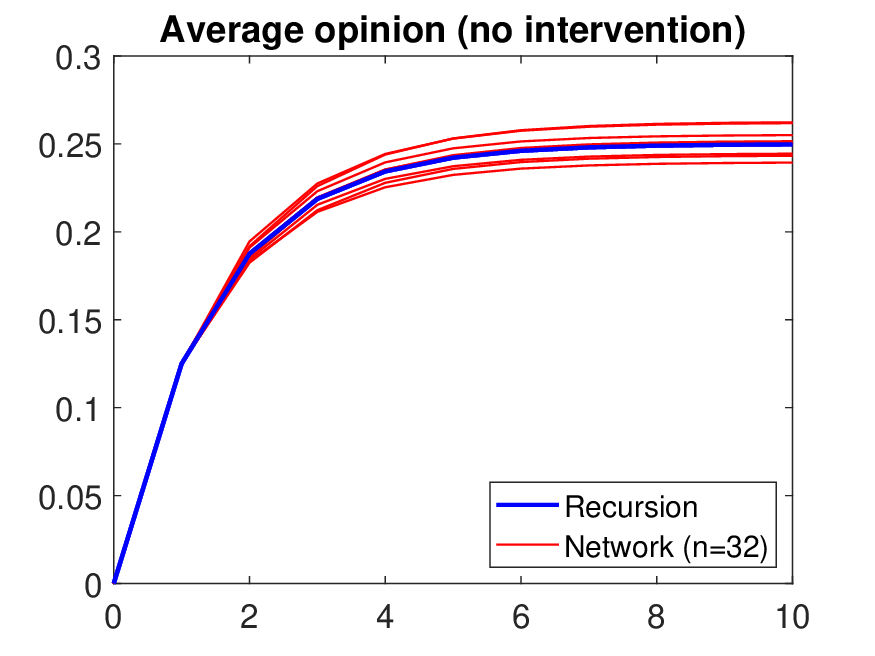}
\includegraphics[width=6cm]{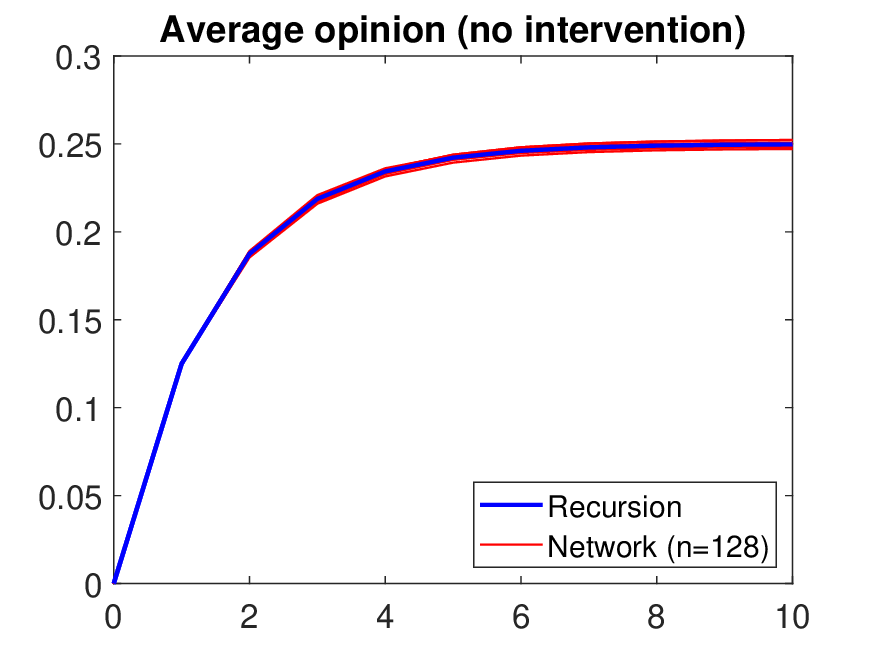}
\caption{\emph{Blue}: The average opinion predicted by recursion \eqref{eq:recursion}. \emph{Red}: The average opinion on 10 network realizations with statistics $p$ defined in Example \ref{example}, and $n=32$ (\emph{above}) and $n=128$ (\emph{below}). \label{fig:concentration}}
\end{figure}
Consider a network with $|\Omega| = 8$ types, with in-degree, out-degree, stubborness, innate opinion and initial condition
	$$
	\ba{rcl}
	k & = & (5,5,5,5,5,5,5,5)\,, \\[2pt]
	d & = & (3,7,3,7,3,7,3,7)\,, \\[2pt]
	a & = & (1,1,3,3,1,1,3,3)/4\,, \\[2pt]
	c & = & (0,0,0,0,1,1,1,1)/2\,, \\[2pt]
	s & = & (0,0,0,0,0,0,0,0)\,.
	\ea
	$$
Figure \ref{fig:concentration} illustrates (in red) the average state of the FJ dynamics \eqref{eq:fj} on $10$ fixed random networks sampled from a configuration model \cite{newman2018networks} with statistics $p = \1/8$ and $n=32$ nodes (above) vs $n=128$ nodes (below). The blue line represents the average opinion predicted by recursion \eqref{eq:recursion} for the null intervention $\beta$ such that $\beta_{ww}(t) = p_w$ and $\beta_{ww'}(t) = 0$ for every time $t$ and every pair of different types $w,w'$. Observe that, despite the number of nodes being relatively small, the average opinions in the fixed networks concentrate around the expected average opinion predicted by the recursion for the resampled network (i.e., the trajectories concentrate around the blue line as $n$ increases).
\end{example}

\subsection{Optimizing the expected dynamics}\label{sec:problem_ref}
Based on Theorem \ref{thm:dynamic}, one can define an   intervention in which $\beta$ is designed to optimize the expected dynamics.
To this end, for every $w,w'$ in $\Omega$ and $t = 1,\cdots,T$, we define $g_{ww'}(t) = n  \cost_{w,t}(c_{w'}-c_w)$, which is the cost for modifying the type of $n$ agents from type $w$ to type $w'$, so that the budget constraint \eqref{eq:budget_micro} can be written in the type space as
\be\label{eq:budget}
\sum_{t=1}^T \sum_{w,w' \in \Omega} g_{ww'}(t) \beta_{ww'}(t) \le B\,.
\ee
Problem \eqref{eq:prob_micro_budget} then becomes
\begin{equation}\label{eq:opt}\ba{cl}
	\!\!\!\!\!\underset{\beta: \{1,\cdots,T\} \to [0,1]^{\Omega \times \Omega}}{\text{max}}
	 & \ds \sum_{t=1}^T \theta_t y^{(\beta)}(t) \\[16pt]
	\text{s.t.} & \eqref{eq:constraints1}-\eqref{eq:pbeta},\eqref{consistent-integer},\eqref{eq:qbeta},\eqref{eq:recursion},\eqref{eq:budget}\,,
	\ea
\end{equation}
and problem \eqref{eq:prob_micro} becomes
\begin{equation}\label{eq:opt2}\ba{cl}
	\!\!\! \underset{\beta: \{1,\cdots,T\} \to [0,1]^{\Omega \times \Omega}}{\text{max}}
	\!\!\!\! & \ds \sum_{t=1}^T \theta_t y^{(\beta)}(t) - \sum_{t = 1}^T \sum_{w,w'} g_{ww'}(t) \beta_{ww'}(t)\\[16pt]
	\text{s.t.} & \eqref{eq:constraints1}-\eqref{eq:pbeta},\eqref{consistent-integer},\eqref{eq:qbeta},\eqref{eq:recursion}\,.
	\ea
\end{equation}
More general problems where the planner can intervene only on a subset of agents may be defined by adding constraints on $\beta$.

\section{Optimization problem solutions}\label{sec:solutions}
This section analyzes the solutions of the two optimization problems \eqref{eq:opt}-\eqref{eq:opt2}. 
To this end we define
\be\label{eq:m}
m_p = 1-\langle a,p \rangle, \quad m_q = 1-\langle a,q \rangle\,,
\ee
and, for every time $t$ and pair of types $w,w'$, we define
$$
r_{ww'}(t) = a_{w}(c_{w'}-c_{w}) \Big(\theta_t + m_p \frac{d_{w}}{\langle p,d \rangle}\sum_{\tau = t+1}^T \theta_\tau m_q^{\tau-t-1}\Big)\,.
$$
The next result states that the two optimization problems are linear programs (with integer constraints) in the space of types, and provides the explicit solution of \eqref{eq:opt2}.
\begin{theorem}\label{thm:problem}
	\emph(i) The optimization problem \eqref{eq:opt} is equivalent (up to constant terms) to the mixed integer linear program
	\begin{equation}\label{eq:new_opt}\ba{cl}
		 \underset{\beta: \{1,\cdots,T\} \to [0,1]^{\Omega \times \Omega}}{\max}
		& \ds \sum_{t = 1}^T \sum_{w,w' \in \Omega} \beta_{ww'}(t)r_{ww'}(t)\ \\[15pt]
		\text{s.t.} & \eqref{eq:constraints1},\eqref{eq:constraints2},\eqref{consistent-integer},\eqref{eq:budget}\,.
		\ea
	\end{equation}
	
\emph(ii) The optimization problem \eqref{eq:opt2} is equivalent (up to constant terms) to the mixed integer linear program 
	\begin{equation}\label{eq:new_opt2}\ba{cl}
	 \!\!\! \underset{\beta: \{1,\cdots,T\} \to [0,1]^{\Omega \times \Omega}}{\max}
	& \!\! \ds \sum_{t = 1}^T \sum_{w,w' \in \Omega} \beta_{ww'}(t)\big(r_{ww'}(t)-g_{ww'}(t)\big) \\[15pt]
	\text{s.t.} & \!\! \eqref{eq:constraints1},\eqref{eq:constraints2},\eqref{consistent-integer}\,.
	\ea
\end{equation}
Moreover, for every type $w$ and time $t$, let  $\Omega^*(w,t) \subseteq \Omega(w)$ be the non-empty set of types defined by
$$
\Omega^*(w,t) =  \underset{w' \in \Omega(w)}{\arg\max} \big(r_{ww'}(t)-g_{ww'}(t)\big)\,.
$$
Then, $\beta^*$ is a solution of \eqref{eq:new_opt2} if and only if it satisfies \eqref{eq:constraints1}, \eqref{consistent-integer} and
\be\label{eq:bstar}
\beta^*_{ww'}(t) = 0\,, \ \forall w \in \Omega\,, \ \forall t = 1,\cdots,T\,, \ \forall w' \notin \Omega^*(w,t)\,,
\ee
and the set of solutions is non-empty.
\end{theorem}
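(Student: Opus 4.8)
The plan is to convert the dynamic optimization over $\beta$ into a static one by solving the scalar recursion of Theorem~\ref{thm:dynamic} in closed form and substituting it into the objective. Writing $\psi$ explicitly as in \eqref{eq:psi}, the recursion \eqref{eq:recursion} reads $x^{(\beta)}(t)=\langle q^{(\beta)}(t),\1-a\rangle\,x^{(\beta)}(t-1)+\sum_{w}q^{(\beta)}_w(t)a_wc_w$ and $y^{(\beta)}(t)=\langle p^{(\beta)}(t),\1-a\rangle\,x^{(\beta)}(t-1)+\sum_{w}p^{(\beta)}_w(t)a_wc_w$. Exactly as in the proof of Theorem~\ref{thm:dynamic}, the constraints \eqref{eq:constraints1}--\eqref{eq:constraints2} together with \eqref{eq:pbeta}--\eqref{eq:qbeta} force $\langle q^{(\beta)}(t),\1-a\rangle=\langle q,\1-a\rangle=m_q$ and $\langle p^{(\beta)}(t),\1-a\rangle=\langle p,\1-a\rangle=m_p$ for every feasible $\beta$, with $m_p,m_q$ as in \eqref{eq:m}; so the first line becomes the affine scalar recursion $x^{(\beta)}(t)=m_q\,x^{(\beta)}(t-1)+\sum_w q^{(\beta)}_w(t)a_wc_w$, which unrolls to $x^{(\beta)}(t)=m_q^{t}x_0+\sum_{s=1}^{t}m_q^{t-s}\sum_w q^{(\beta)}_w(s)a_wc_w$.

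First I would substitute this expression for $x^{(\beta)}(t-1)$ into $y^{(\beta)}(t)=m_p x^{(\beta)}(t-1)+\sum_w p^{(\beta)}_w(t)a_wc_w$ and then into $\sum_{t=1}^T\theta_t y^{(\beta)}(t)$. The $m_p m_q^{t-1}x_0$ contribution is a $\beta$-independent constant; exchanging the order of the resulting double sum over $t$ and over the inner time index produces, for each $t$, the coefficient $m_p\sum_{\tau=t+1}^T\theta_\tau m_q^{\tau-t-1}$ in front of $\sum_w q^{(\beta)}_w(t)a_wc_w$ plus $\theta_t$ in front of $\sum_w p^{(\beta)}_w(t)a_wc_w$. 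Next I would rewrite both weighted sums in terms of $\beta$ via $p^{(\beta)}_w(t)=\sum_{w'}\beta_{w'w}(t)$ and $q^{(\beta)}_w(t)=d_w\sum_{w'}\beta_{w'w}(t)/\langle p,d\rangle$, relabel $w\leftrightarrow w'$, and use \eqref{eq:constraints2} --- which guarantees that $\beta_{ww'}(t)>0$ implies $d_w=d_{w'}$ and $a_w=a_{w'}$ --- to replace $a_{w'},d_{w'}$ by $a_w,d_w$ in every surviving term, obtaining $\sum_{t,w,w'}\beta_{ww'}(t)\,a_wc_{w'}\big(\theta_t+m_p\tfrac{d_w}{\langle p,d\rangle}\sum_{\tau=t+1}^T\theta_\tau m_q^{\tau-t-1}\big)$. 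Finally, splitting $a_wc_{w'}=a_w(c_{w'}-c_w)+a_wc_w$ and using the mass-conservation constraint \eqref{eq:constraints1}, $\sum_{w'}\beta_{ww'}(t)=p_w$, the $a_wc_w$ part is once more $\beta$-independent, so up to an additive constant the objective equals $\sum_{t,w,w'}\beta_{ww'}(t)r_{ww'}(t)$ with $r_{ww'}(t)$ as defined before the statement; since \eqref{eq:pbeta},\eqref{eq:qbeta},\eqref{eq:recursion} merely define the eliminated variables $p^{(\beta)},q^{(\beta)},x^{(\beta)},y^{(\beta)}$, this proves part~(i). For part~(ii) one repeats the same computation for the objective of \eqref{eq:opt2}, which only subtracts the already linear term $\sum_{t,w,w'}g_{ww'}(t)\beta_{ww'}(t)$, yielding \eqref{eq:new_opt2}.

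For the closed-form solution of \eqref{eq:new_opt2}, the key point is that both the objective and the constraints \eqref{eq:constraints1}, \eqref{eq:constraints2}, \eqref{consistent-integer} decouple completely across the pairs $(w,t)$: for fixed $w$ and $t$ one maximizes $\sum_{w'\in\Omega(w)}\beta_{ww'}(t)\big(r_{ww'}(t)-g_{ww'}(t)\big)$ over the nonnegative vectors $(\beta_{ww'}(t))_{w'\in\Omega(w)}$ with $\sum_{w'\in\Omega(w)}\beta_{ww'}(t)=p_w$ and $n\beta_{ww'}(t)\in\Z_+$ (the bound $\beta_{ww'}(t)\le1$ being automatic as $p_w\le1$). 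This is a linear functional over a scaled simplex intersected with the lattice $\tfrac1n\Z_+^{\Omega(w)}$, so its value is at most $p_w\max_{w'\in\Omega(w)}\big(r_{ww'}(t)-g_{ww'}(t)\big)$, with equality precisely when all the mass $p_w$ sits on $\Omega^*(w,t)$; this bound is attained because $np_w\in\Z_+$ by \eqref{consistent-1moment}, so placing all the mass on a single $w^\star\in\Omega^*(w,t)$ is lattice-feasible and also satisfies \eqref{eq:constraints2} since $\Omega^*(w,t)\subseteq\Omega(w)$. Conversely, if a feasible $\beta$ puts positive mass on some $w'\notin\Omega^*(w,t)$, transporting that whole coordinate onto any $w^\star\in\Omega^*(w,t)$ preserves feasibility (integrality included, since an entire coordinate is moved) and strictly increases the objective, so $\beta$ is not optimal. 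This yields the characterization \eqref{eq:bstar} --- note that \eqref{eq:bstar} together with $\Omega^*(w,t)\subseteq\Omega(w)$ implies \eqref{eq:constraints2} --- and, since $\Omega(w)\ni w$ is nonempty so is the argmax set $\Omega^*(w,t)$, while the ``all mass on one optimal type'' intervention exhibits a feasible optimizer; hence the solution set is nonempty.

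The main obstacle is expected to be organizational rather than conceptual: the bookkeeping in the second paragraph --- the interchange of summation order, the relabelling $w\leftrightarrow w'$, and the repeated appeals to \eqref{eq:constraints1}--\eqref{eq:constraints2} both to turn $a_{w'},d_{w'}$ into $a_w,d_w$ and to recognize that the $x_0$-term and the $a_wc_w$-term are $\beta$-independent --- must be carried out carefully so that the ``up to constant terms'' claims in (i) and (ii) are airtight. The integrality argument in the last paragraph is the only other point needing attention, and it is standard once one observes that moving an entire coordinate of $\beta$ preserves membership in $\tfrac1n\Z_+$.
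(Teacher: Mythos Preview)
Your proposal is correct and follows essentially the same approach as the paper: you unroll the scalar recursion of Theorem~\ref{thm:dynamic}, substitute into $\sum_t\theta_t y^{(\beta)}(t)$, swap the order of summation, express $p^{(\beta)},q^{(\beta)}$ via $\beta$, and use \eqref{eq:constraints1}--\eqref{eq:constraints2} to replace $a_{w'},d_{w'}$ by $a_w,d_w$ and to absorb the $a_wc_w$ piece into a $\beta$-independent constant (the paper writes this last step as $\iota_{w'}(t)=r_{ww'}(t)+\iota_w(t)$, which is exactly your splitting $a_wc_{w'}=a_w(c_{w'}-c_w)+a_wc_w$). Your decoupled-simplex argument for the characterization of the optimizers of \eqref{eq:new_opt2}, including the integrality check via $np_w\in\Z_+$ and the mass-transport ``converse,'' matches the paper's bound-and-tightness argument and adds a slightly more explicit justification of the strict-improvement direction.
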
\smallskip
\begin{proof}
	\emph(i) Define for simplicity $\mu$ in $\R^{\Omega}$ with entries $\mu_w = a_w c_w$	
	and note that \eqref{eq:constraints2} yields 
	$\langle p^{(\beta)}(t),a \rangle = \langle p,a \rangle$, $\langle q^{(\beta)}(t),a \rangle = \langle q,a \rangle$, and $\langle p^{(\beta)}(t), d \rangle =  \langle p,d \rangle$,
	for every feasible intervention $\beta$ and time $t$.
	Using this, \eqref{eq:recursion}, and \eqref{eq:m}, we get 
	\be\label{eq:x_beta}
	x^{(\beta)}(t) = m_q x^{(\beta)}(t-1) + \langle q^\beta(t),\mu \rangle\,.
	\ee 
	Iterating this equation,
	\be\label{eq:x_beta}
	\begin{aligned}
	x^{(\beta)}(t) &= m^t_q x_0 + \sum_{\tau=0}^{t-1} m_q^{t-1-\tau} \langle q^{(\beta)}(\tau+1),\mu \rangle\,\\
	&=m^t_q x_0 + \sum_{\tau=1}^{t} m_q^{t-\tau} \langle q^{(\beta)}(\tau),\mu \rangle\,.
	\end{aligned}
	\ee
	Note that \eqref{eq:recursion} implies $y^{(\beta)}(t) = m_p x^{(\beta)}(t-1) + \langle p^{(\beta)}(t),\mu\rangle$.
	Hence,
	\be\label{eq:calcoli}
	\!\!\ba{rcl}
	\!\! & & \ds \sum_{t=1}^T \theta_t y^{(\beta)}(t) \\[3pt] & = & \ds \sum_{t=1}^T \theta_t m_p m_q^{t-1} x_0 + \sum_{t=1}^T \theta_t \langle p^{(\beta)}(t),\mu \rangle \ + \\[3pt]
	& + & \ds \sum_{t = 1}^T \theta_t m_p \sum_{\tau = 1}^{t-1} m_q^{t-1-\tau} \langle q^{(\beta)}(\tau),\mu \rangle \\[3pt]
	& = & \ds \sum_{t=1}^T \theta_t \Big(m_p m_q^{t-1} x_0 + \sum_{w,w'} \beta_{w'w}(t) \mu_w\Big) + \\[3pt]
	& + & \ds m_p \sum_{t = 1}^T\sum_{w,w'}\frac{\beta_{w'w}(t) d_w \mu_w}{\langle p,d \rangle} \sum_{\tau = t+1}^T \theta_\tau m_q^{\tau-t-1}\,,
	\ea
	\ee
	where the last equivalence follows from the fact that for an arbitrary function $f(t,\tau)$,
	$$
	\sum_{t=1}^T \sum_{\tau=1}^{t-1} f(t,\tau) = \sum_{\tau=1}^T \sum_{t= \tau+1}^{T} f(t,\tau) = \sum_{t=1}^T \sum_{\tau = t+1}^{T} f(\tau,t)\,.$$
	From \eqref{eq:calcoli}, using $\mu_w = a_w c_w$, the fact that $m_p m_q^{t-1} x_0$ does not depend on the intervention $\beta$, and swapping $w$ and $w'$, we then get that \eqref{eq:opt} is equivalent (up to constant terms) to
	$$
	\ba{rl}
		\!\!\!\!\!\! \underset{\beta: \{1,\cdots,T\} \to [0,1]^{\Omega \times \Omega}}{\text{max}}
		& \ds \sum_{t = 1}^T \sum_{w,w'} \beta_{ww'}(t)\iota_{w'}(t)\ \\[15pt]
		\text{s.t.} & \eqref{eq:constraints1},\eqref{eq:constraints2},\eqref{consistent-integer},\eqref{eq:budget}\,,
		\ea
	$$
	with $$\iota_{w'}(t) = a_{w'} c_{w'} \Big(\theta_t + m_p \frac{d_{w'}}{\langle p,d \rangle}\sum_{\tau = t+1}^T \theta_\tau m_q^{\tau-t-1}\Big)\,.$$
	To conclude the proof, note from \eqref{eq:constraints2} that $\iota_{w'}(t)= r_{ww'}(t) +\iota_{w}(t)$ for every pair $w,w'$ such that $\beta_{ww'}(t)>0$, and
	$$
	\sum_{w,w'}\iota_{w}(t) \beta_{ww'}(t) = \sum_{w}\iota_{w}(t) p_{w}
	$$
	does not depend on the intervention $\beta$.
	
	\emph(ii) The proof of the equivalence between \eqref{eq:opt2} and \eqref{eq:new_opt2} follows the same steps as in (i). Now, observe that for every feasible $\beta$, the objective function of \eqref{eq:new_opt2} is upper bounded by
	$$
	\ba{rl}
	& \ds \sum_{t = 1}^T \sum_{w,w' \in \Omega} \beta_{ww'}(t)\big(r_{ww'}(t)-g_{ww'}(t)\big) \\[10pt]
	\le & \ds \sum_{t=1}^T \sum_{w \in \Omega} \underset{w'' \in \Omega(w)}{\max} \big(r_{ww''}(t)-g_{ww''}(t)\big) \sum_{w' \in \Omega} \beta_{ww'}(t) \\
	= & \ds \sum_{t=1}^T \sum_{w \in \Omega} p_w \cdot \underset{w'' \in \Omega(w)}{\max} \big(r_{ww''}(t)-g_{ww''}(t)\big)\,,
	\ea$$
	where the inequality follows from \eqref{eq:constraints2} and the equality from \eqref{eq:constraints1}. Note in particular that the bound is tight if and only if, for every $w$ and $t$, $\beta_{ww'}(t) = 0$ for every $w' \notin \Omega^*(w,t)$. This proves \eqref{eq:bstar}. The existence of at least a solution $\beta^*$ follows from the fact that $\Omega^*(w,t)$ is non-empty for every $w$ and $t$. Given this observation, we can always construct a solution $\beta^*$ as follows. For every $w$ and $t$, we take an arbitrary type $\tilde w_{w,t}$ in $\Omega^*(w,t)$ and  let $\beta^*_{w\tilde w_{w,t}}(t) = p_w$.
\end{proof}

\begin{remark} 
Note that, under constraint \eqref{consistent-integer}, problems \eqref{eq:new_opt} and \eqref{eq:new_opt2} are mixed integer linear programs, as $\beta$ is constrained to take a finite set of values. An alternative formulation builds on the definition of the resampled network given in Remark~\ref{remark:design_u}. In this case, constraint \eqref{consistent-integer} may be omitted, and the two problems become linear programs.
\end{remark}

Theorem \ref{thm:problem} states that the intervention problem with budget constraint may be easily solved numerically, and provides an explicit solution of the intervention problem with no budget constraint. In particular, the solutions have the following structure. At each time $t$, the innate opinion $c_w$ of all nodes of type $w$ must be increased to $c_{w'}$, where $w'$ is a type that maximizes $r_{ww'}(t)-g_{ww'}(t)$ (i.e., the benefit of the intervention minus the cost of the intervention) among those that differ from $w$ only in the innate opinion. Observe that, if the cost of the intervention on a type $w$ at time $t$ exceeds the benefit of the intervention for every alternative type $w'\in\Omega(w)$, then it is optimal not to intervene, as $r_{ww}(t) = g_{ww}(t) = 0$ for all types and times. Observe also that $r_{ww'}(t)$ is composed of two terms. 
The term $$\theta_t a_{w} (c_{w'}-c_w)$$ is the benefit that increasing an agents' innate opinion at time $t$ brings to the average state at time $t$. The term $$\frac{d_{w}}{\langle p,d\rangle}a_{w}(c_{w'}-c_w) m_p \sum_{\tau = t+1}^T \theta_\tau m_q^{\tau-t-1}$$ describes how increasing the agents' innate opinion at time $t$ influences the average state in the following times $\tau = t+1,\cdots,T$ due to the exchange of opinions in the network. Notice that this term becomes more relevant when $t$ is small with respect to the time horizon $T$, when the average stubborness weighted by the indegree $\langle q,a \rangle$ is small (therefore, $m_q$ is large), and for types with large stubborness, and large in-degree compared to average in-degree. In fact, all these factors contribute to spreading the input to the rest of the network. Finally, notice that the out-degree of the nodes does not play any role in the optimization problem.

\section{An heuristic procedure for fixed networks}\label{sec:heuristic}

The analysis and optimality guarantees in Sections \ref{sec:meanfield} and \ref{sec:solutions} refer to random networks generated according to the procedure in Definition \ref{def:resampling} with network resampling. Yet, from this analysis one could also define an heuristic procedure for influencing opinion dynamics in fixed networks by computing the solution of problems \eqref{eq:new_opt} and \eqref{eq:new_opt2} for the empirical network statistics associated with the fixed network. Such approach is suboptimal with respect to solving problems \eqref{eq:prob_micro_budget}-\eqref{eq:prob_micro} for the following  reasons. First, problems \eqref{eq:new_opt} and \eqref{eq:new_opt2} are formulated for random networks with resampling, not for fixed networks (yet this heuristic solution could be appropriate for fixed networks that can be well approximated by the random network model). Second, the intervention in problems \eqref{eq:new_opt} and \eqref{eq:new_opt2} is anonymous in the sense that the planner cannot distinguish among agents of the same type. This restricts the space of possible interventions, thus inducing some level of sub-optimality. Third, since the number of types is assumed finite, in this setting the innate opinions of the agents can be modified only to a finite set of values, in contrast with problems  \eqref{eq:prob_micro_budget}-\eqref{eq:prob_micro}, where the intervention $u_i(t)$ was assumed continuous. Despite these caveats,  the next example suggests that there are cases in which this heuristic works well.

\addtocounter{example}{-1}
\begin{example}[continued]
\begin{figure}
	\centering
	\includegraphics[width=7cm]{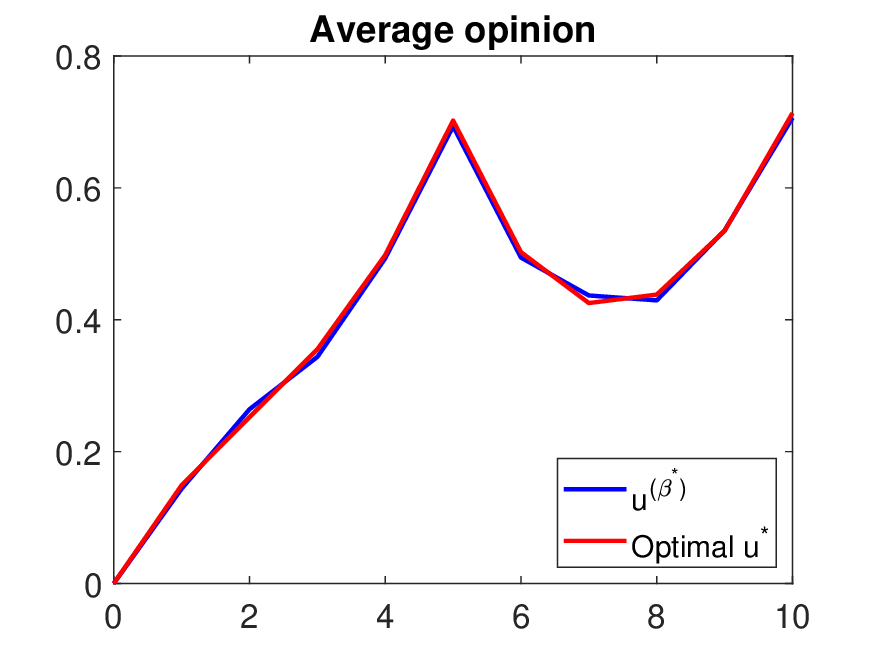}
	\caption{The average opinion $\bar Z^{(u^{(\beta)})}(t)$ under intervention $u^{(\beta)}$ (\emph{blue}) and the average opinion $Z^{(u^*)}(t)$ under the optimal intervention $u^*$ (\emph{red}), for Example \ref{example} with $n=64$ nodes. \label{fig:micromacro}}
\end{figure}
We assume that for every node $i$ and time $t=1,\cdots,10$, the intervention cost is $\bar h_{i,t}(u_i(t)) = (u_i(t))^2/n$. We consider a unitary budget $B=1$, and let $\theta_5 = \theta_{10} = 1/2$ and $\theta_t = 0$ for every $t \neq 5,10$. We then generate a fixed random network with statistics $p = \1 / 8$ and $n=64$, and find numerically the optimal solution $u^*$ of \eqref{eq:prob_micro_budget}. 
Then, we find a numerical solution $\beta^*$ of \eqref{eq:opt} using Theorem \ref{thm:problem}(i). To do this, we work under the assumption that the innate opinion of the agents may be increased with a step-size equal to $1/10$, and expand the set of types accordingly by adding new types with null empirical frequency before the intervention that differ from the original ones only in the innate opinion. We then generate a random intervention $u^{(\beta^*)}$ as by Definition \ref{def:resampling}.
Figure \ref{fig:micromacro} compares the average opinion of \eqref{eq:dyn_int_fixed} in time under intervention $u^{(\beta^*)}$, with the average opinion of \eqref{eq:dyn_int_fixed} under the exact optimal intervention $u^*$. Although $u^{(\beta^*)}$ allows only discrete variations of the agent innate opinions, the two interventions achieve a similar performance, as
$$
\sum_{t = 1}^T \theta_t \bar Z^{(u^*)}(t) = 0.6998\,, \quad \sum_{t = 1}^T \theta_t \bar Z^{(u^{(\beta^*)})}(t) = 0.7078\,,
$$
showing the validity of the heuristic approach for such fixed random network.
\end{example}

\section{Conclusion}\label{sec:conclusion} 
We formulated two intervention problems in which the planner can dynamically modify the innate opinions of the agents of a network to maximize the expected transient average opinion of  Friedkin-Johnsen dynamics over random time-varying networks. We showed that such problems can be reformulated as linear programs whose dimension depends on the number of agent types instead of the network size. Moreover, the proposed approach relies on network statistics instead of exact network knowledge. Future research lines include investigating the relation between the obtained recursion and the dynamics on fixed random networks and the application of similar techniques for optimization problems that involve higher-order moments of the network state, or to consider interventions on other features of the nodes, such as their stubborness level. Moreover, we aim to apply these techniques to other network dynamics.

\bibliographystyle{IEEEtran}
\bibliography{bib}

\end{document}